	\newtheorem{claim}{Claim}
\title{Star height via games}
\author{Miko{\l}aj Boja\'nczyk}
\begin{document}
\maketitle
\begin{abstract} 
		This paper proposes a new algorithm  deciding the star height problem. As shown by Kirsten, the star height problem reduces to a  problem concerning automata with counters, called limitedness. The new contribution is a different algorithm for the limitedness problem, which reduces it to solving  a Gale-Stewart game with an  $\omega$-regular winning condition.
	\end{abstract}

The star height of a regular expression is its nesting depth of the Kleene star. 
The star height problem is to compute the star height of a regular language of finite words, i.e.~the smallest star height of a  regular expression that defines the language. For instance, the expression  $	(a^*b^*)^*$ has star height two, but its language has star height one,  because it is defined by the  expression $(a+b)^*$ of star height one, and it cannot be defined by a star-free expression. Here we consider regular expressions without complementation, in which case star-free regular expressions define only finite languages.

The star height problem is notorious for being one of the most technically challenging problems in automata theory. The problem was first stated by Eggan in 1963 and remained open for 25 years until it was solved by  Hashiguchi in~\cite{DBLP:journals/iandc/Hashiguchi88}. Hashiguchi's solution  is very difficult. Much work was  devoted to simplifying the proof, resulting in important new ideas like the tropical semiring~\cite{Simon88} or Simon's Factorisation Forest Theorem~\cite{Simon88b}.   This work culminated in a  simplified proof, with an elementary complexity bound, which was given by Kirsten in~\cite{Kirsten05}. Kirsten introduces an automaton model, called a {nested distance desert automaton}, and shows that the star height problem can be reduced to a decision problem for this automaton model, called the {limitedness problem}. The reduction of the star height problem to the limitedness problem is not difficult, and most of the effort in~\cite{Kirsten05} goes into deciding the limitedness problem.   Later solutions to the limitedness problem can be found in the work of Colcombet on cost functions~\cite{Colcombet09}, or  in the PhD thesis of Toru\'nczyk~\cite{Torunczyk11}.  All  are based on variants of the limitedness problem.

The contribution of this paper is a conceptually simple algorithm deciding the limitedness problem, and therefore also the star height problem. The algorithm is a rather simple reduction of  the limitedness problem to the  Church Synthesis Problem, i.e.~the problem of  finding the winner in a Gale-Stewart game with an $\omega$-regular winning condition. Since the Church Synthesis Problem itself is a hard problem, the contribution of this paper should mainly be seen as  establishing a connection between the two questions (limitedness, and the Church Synthesis Problem), as opposed to giving a simple and self-contained solution to the star height problem.

We show how the star height problem is reduced to the limitedness problem in Section~\ref{sec:reduction}, and in Section~\ref{sec:limitedness} we present the new contribution, namely the algorithm for limitedness.  
 
 I would like to thank Nathana\"el Fijalkow, Denis Kuperberg, Christof L\"oding and the anonymous referees for their helpful comments.  This paper is a journal version of a paper presented at LICS 2015. The main difference between the conference version and this paper is the addition of pictures.

	\newcommand{\sem}[1]{[\![#1]\!]}
			\newcommand{\seminf}[1]{\sem{#1}_{\omega}}
			\newcommand{\finrange}[1]{\langle \delta_1,\ldots,\delta_{#1} \rangle}
			\newcommand{\infrange}{\langle \delta_1,\delta_2,\ldots \rangle}

	\section{Reduction to limitedness of cost automata}
	\label{sec:reduction}
	This section introduces cost automata, and describes a reduction of the star height problem to a problem for cost automata, called limitedness. The exact reduction used in this paper was shown in~\cite{Kirsten05} and is included to make the paper self-contained. The idea to use counters and limitedness in the study of star height dates back to the original proof of Hashiguchi.
	
	\subsection*{Cost automata and the limitedness problem} We begin by introducing the automaton model used in the limitedness problem. This is  an automaton with counters, which in the literature  appears under the following names: 
	 \emph{nested distance desert automaton} in~\cite{Kirsten05};  \emph{hierarchical B-automaton} in~\cite{BojanczykColcombet06} and~\cite{Colcombet09}; and  \emph{R-automaton} in~\cite{Aky}. (The last model is slightly more general, being equivalent to the non-hierarchical version of B-automata, but its limitedness problem is no more difficult.) Here, we simply  use the name \emph{cost automaton}.	 	Define  a {cost automaton} to be a nondeterministic finite automaton, additionally equipped with a finite  set of \emph{counters} which store natural numbers. For each counter $c$, there is a distinguished subset of transitions that \emph{increment counter $c$}, and a distinguished subset of transitions that \emph{reset counter $c$}. The counters are totally ordered, say they are called $0,\ldots,n$, and every transition that  resets or increments a counter $c$ must also reset all counters $0,\ldots,c-1$. At the beginning of a run, all counters have value zero.

Define the \emph{value} of a run  to be the maximal value which could be seen in some counter during the run. More formally, this is the smallest $m$ such that for every counter $c$, there can be at most $m$ transitions in the run that increment counter $c$ and which are not separated by a transition which resets counter $c$.

\begin{ourexample}\label{ex:max-block}
Here is a picture of a (deterministic) cost automaton with one counter and input alphabet $\set{a,b}$. In the pictures, initial and final states are represented using dangling arrows, and the counter actions are represented in red letters.
	 \mypic{1}
The automaton increments the counter whenever it sees $a$, and resets it whenever it sees $b$. Therefore, for every input word, the value of the unique run of this automaton is the biggest number of consecutive $a$'s that are not separated by any $b$.
\end{ourexample}

\begin{ourexample}\label{ex:two-counters}
Here is a picture of a (deterministic) cost automaton with two  counter called $0,1$ and input alphabet $\set{a,b}$. 
	 \mypic{3}
Therefore, for every input word, the value of the unique run of this automaton is the maximum of the following values:
\begin{itemize}
	\item  biggest number of $a$'s that are not separated by any $b$;
	\item  the number of $b$'s.
\end{itemize}
In particular, if the input word length $n$, then the value of the unique run is at least $\sqrt n$.
\end{ourexample}

\begin{ourexample}\label{ex:min-block}
Here is a picture of a (nondeterministic) cost automaton with one counter and input alphabet $\set{a,b}$. Nondeterminism is used in selecting the initial state, and the moment when the first horizontal transition is used. The only transition that affects the counter is the self-loop  in the middle.
	 \mypic{2}
	 The value of a run for this automaton is the number of $a$'s the was seen after first entering the middle state, until some $b$ is reached. For an input word of the form 
	 \begin{align*}
  a^{n_1} b a^{n_2}  \cdots b a^{n_k}
\end{align*}
the accepting runs have values $n_1,n_2,\ldots,n_k$ respectively. 
\end{ourexample}

	A cost automaton is said to be \emph{limited} over a language $L$ if there exists a bound $m$ such that for every word in  $L$ there exists a run of the cost automaton which is both accepting (i.e.~begins in an initial state and ends in a final state) and has  value at most $m$. 	
 The \emph{limitedness problem} is to decide, given a cost automaton and a regular language $L$, whether the automaton is limited over $L$.  From the perspective of the limitedness problem, it is natural to associate (cf.~\cite{Colcombet09}) to a cost automaton the function $\sem \Aa$  defined by
 \begin{align*}
    [\Aa](w) = \begin{cases}
    	\infty & \mbox{if there is no run accepting run of $\Aa$ over $w$}\\
    	n & \mbox{if $n$ is the minimal value of an accepting run of $\Aa$ over $w$}
    \end{cases}.
\end{align*}
In terms of the function $\sem \Aa$, the  limitedness problem asks if $\sem \Aa$ is bounded over the subset of arguments $L$ (having at least one value $\infty$ means that a function  is unbounded).

\begin{ourexample}
Define a \emph{$b$-block} in a word to be a maximal set of connected positions which all have label $b$. 
If $\Aa$ is the automaton in Example~\ref{ex:max-block}, then $\sem \Aa$ maps a word to the maximal size of an $b$-block; for Example~\ref{ex:min-block} the corresponding function returns the minimal size of a  $b$-block. the  of consecutive $a$'s.  If $\Aa$ is the automaton in Example~\ref{ex:two-counters}, then
\begin{align*}
  \sqrt {|w|} \le \sem \Aa (w) \le |w|.
\end{align*}
In particular, the answer to the limitedness problem for the automaton from Example~\ref{ex:two-counters} is ``limited'' if and only if the set of arguments $L$ is finite.
\end{ourexample}

 The rest of Section~\ref{sec:reduction} is devoted to proving the   reduction stated in the following theorem. This is the same reduction that was used in Kirsten's proof from~\cite{Kirsten05}, a detailed proof can be  found in Proposition 6.8 of~\cite{Kirsten06}.
 
 \begin{thm}\label{thm:reduction}
 	The star height problem reduces to the limitedness problem.
 \end{thm}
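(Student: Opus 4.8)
The plan is to proceed in two stages. First I reduce the problem of \emph{computing} $\mathrm{sh}(L)$ to finitely many decision problems of the form ``is $\mathrm{sh}(L)\le h$?'', and then I reduce each of those to an instance of the limitedness problem. For the first stage it suffices to exhibit a computable a priori upper bound on $\mathrm{sh}(L)$: the textbook state-elimination procedure turns an $n$-state DFA for $L$ into a regular expression in which each eliminated state contributes at most one fresh level of Kleene star, so $\mathrm{sh}(L)\le|Q|$, where $Q$ is the state set of the minimal DFA $\mathcal D$ of $L$. Hence it is enough to decide, for each $h\in\{0,1,\ldots,|Q|\}$, whether $\mathrm{sh}(L)\le h$, and to output the least such~$h$.

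So fix $h$. The reduction I have in mind rests on the classical inductive description of the languages of star height at most $h$ — the least class containing the finite languages and closed under union, concatenation, and Kleene star applied at most $h$ times nested — and produces, from (the transition monoid $M$ of) the minimal DFA $\mathcal D$, a cost automaton $\Aa=\Aa_{L,h}$ with $h$ hierarchically ordered counters such that
\[
  \mathrm{sh}(L)\le h \qquad\Longleftrightarrow\qquad \Aa \text{ is limited over } L .
\]
On input $w$, the automaton $\Aa$ guesses and verifies an accepting run of $\mathcal D$ on $w$ — so that possessing \emph{any} accepting run of $\Aa$ already witnesses $w\in L$ — while at the same time guessing a nested factorisation of $w$ of depth at most $h$, in the spirit of Simon's factorisation forests: at most $h$ nested levels of ``idempotent blocks'' (several consecutive factors all sharing a single idempotent value in $M$, hence freely repeatable), joined by ordinary concatenation. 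Repeating an idempotent block at level $i$ costs nothing but resets every counter below level $i$, while each concatenation step at level $i$ increments counter $i$; thus the value of a run records, level by level, how much non-repeatable ``glue'' the chosen factorisation uses, and $\sem\Aa(w)$ is the least such amount over all depth-$\le h$ factorisations of $w$ subordinate to $\mathcal D$.

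With $\Aa$ at hand, both directions of the equivalence are short. For $\Rightarrow$, take a regular expression $E$ of star height $\le h$ with $L(E)=L$: the parse tree of any $w\in L$ \emph{is} a nested factorisation of the required shape — the Kleene stars of $E$ supply the (at most $h$ nested) idempotent blocks, after a bounded amount of Simon-style stabilisation bookkeeping to make iterated factors genuinely idempotent, and the star-free skeleton of $E$ supplies the glue — so it yields an accepting run of $\Aa$ whose value is bounded by a constant depending only on~$E$; hence $\Aa$ is limited over $L$. For $\Leftarrow$, suppose every $w\in L$ has an accepting run of value $\le m$. Then every $w\in L$ admits a depth-$\le h$ factorisation subordinate to $\mathcal D$ using at most $m$ glue steps per level; there are only finitely many ``shapes'' of such factorisations, since the underlying tree has bounded depth and its nodes are labelled by the finite sets $M$ and $\Sigma$; each shape $s$ is routinely converted into a regular expression $E_s$ of star height $\le h$ with $L(E_s)\subseteq L$ (idempotent blocks become stars, everything else is copied verbatim, and the $M$-labelling guarantees $L(E_s)\subseteq L$); and $L=\bigcup_s L(E_s)$, a finite union of star-height-$\le h$ languages, so $\mathrm{sh}(L)\le h$.

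The one delicate point — and the place I expect the real, if modest, work — is calibrating the reset/increment discipline of $\Aa$ so that a \emph{single} automaton serves both implications: the hierarchy of resets must be lax enough that a genuine star-height-$\le h$ expression can be replayed with only a constant number of un-reset increments on each counter, yet tight enough that a uniform value bound really does cap the iteration and glue data of a finite family of candidate expressions. This is bookkeeping rather than a genuine obstacle; I would present the construction and the two directions above in detail, referring to~\cite{Kirsten05} and Proposition 6.8 of~\cite{Kirsten06} for the routine verifications.
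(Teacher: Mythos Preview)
Your outline has the right shape---build, from a recognising monoid $M$ for $L$, a cost automaton $\Aa_{L,h}$ whose limitedness over $L$ characterises $\mathrm{sh}(L)\le h$---but the paper's construction differs from yours at precisely the point you yourself flag as delicate, and your version has a real gap there. The paper does \emph{not} use idempotents or Simon's factorisation forests. It works instead with a syntactic normal form (``string expressions'') carrying a \emph{degree} parameter, and builds $\Aa$ so that $\sem\Aa(w)$ is the least degree $m$ of a height-$\le h$ string expression $e$ with $w\in e\subseteq L$. The inductive characterisation (the Lemma inside the proof of Proposition~\ref{lem:univaut}) is: such an $e$ of degree $\le m$ exists iff $w=w_1v_1\cdots w_iv_i$ with $i\le m$, $|w_j|\le m$, and there are \emph{arbitrary subsets} $N_1,\ldots,N_i\subseteq M$ with $w_1N_1^{\,*}\cdots w_iN_i^{\,*}\subseteq\alpha(L)$ and each $v_j\in\bigl([N_j]_{h-1}^{\,m}\bigr)^*$. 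The automaton guesses the factorisation and stores the current $N_j$ in its finite control; both directions of the equivalence then follow directly from the definition of string expressions, and the finiteness argument for $\Leftarrow$ is just ``finitely many string expressions of given height and degree''.

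Your restriction to \emph{idempotent} blocks is where the $\Rightarrow$ direction breaks. Given $E=F^*$ with $F$ of height $\le h-1$ and $w=u_1\cdots u_k$ with each $u_j\in L(F)$, the values $\alpha(u_j)$ range over an arbitrary set $N\subseteq M$, not a single idempotent. Simon's theorem will reorganise the sequence $\alpha(u_1),\ldots,\alpha(u_k)$ into a depth-$O(|M|)$ tree with idempotent nodes, but the children of such a node are \emph{products} $u_a\cdots u_b$ of unboundedly many $u_j$'s: recursing on such a child at level $h-1$ costs unbounded glue there, while absorbing the whole Simon tree at level $h$ requires $O(|M|)$ nested reset/increment layers per star-height level rather than one. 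So either your automaton needs roughly $h\cdot O(|M|)$ hierarchical counters (a different construction from the one you describe), or you must replace ``single idempotent $e$'' by ``arbitrary subset $N\subseteq M$'' in the block condition---which is exactly the paper's move, and which makes the $\Rightarrow$ direction a one-line unfolding of the string-expression syntax with no appeal to Simon at all. The ``modest bookkeeping'' you anticipate is in fact this structural change.
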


 In the reduction we consider  a normal form of regular expressions, which are called \emph{string expressions} following~\cite{Cohen}. The main idea is that in a string expression, a concatenation of unions is converted into a union of concatenations. String expressions have two parameters, called    
(star) height  and degree, and  are defined by induction on   height. A string expression of height  $0$ and degree  at most $m$ is defined to be any regular expression describing a finite set of words of length at most $m$.
	For $h \ge 1$, a string expression of height  at most $h$ and degree at most $m$ is defined to be any finite union of expressions of the form
	\mypic{4}
	
	 A straightforward on $h$, see Lemma 3.1 in~\cite{Cohen}, shows that every regular expression of star height $h$ is equivalent to  a string expression of height $h$ and some degree.

	\newcommand{\univ}[3]{e_{#1}^{#2,#3}}
	\newcommand{\univaut}[2]{#1^{#2}}
	\newcommand{\univautparam}[3]{\Aa_{#1,#2}^{#3}}
	\newcommand{\wdlang}[3]{\Ll_{#1}^{#2,#3}}

	\newcommand{\stringexp}[3]{\invim{#1}_{#2}^{#3}}
	\newcommand{\invim}[1]{[#1]}
 The main observation in the reduction of the star height problem to the limitedness problem is the following proposition, which says that for every star height $h$ and every regular language $L$,   there is a  cost automaton which maps an input word to the smallest degree of a string expression that contains the input word,  has  star height at most $h$,   and is contained in~$L$.
 
	 \begin{prop}\label{lem:univaut} For every regular language $L \subseteq A^*$  and $h \in \Nat$ one can compute a cost automaton $\Aa$ such that $\sem{\Aa}$ maps  a word $w \in A^*$ to the number
	 \begin{align*}
\min\set{\text{degree of $e$} :   \text{$e$ is  a height $\le h$ string expression $e$ satisfying $w \in e \subseteq L$}}.
\end{align*}
	 \end{prop}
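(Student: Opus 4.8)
We construct $\Aa$ by induction on the height $h$, establishing correctness of the construction along the way. For $h=0$, a height-$0$ string expression of degree at most $m$ is a finite set of words each of length at most $m$; since the singleton $\set{w}$ is such an expression, is contained in $L$ precisely when $w\in L$, and every expression containing $w$ has degree at least $|w|$, the function to be computed is $w\mapsto|w|$ for $w\in L$ and $w\mapsto\infty$ otherwise. It is realised by the cost automaton that runs a DFA recognising $L$, has one counter that is incremented on every transition and never reset, and whose accepting transitions are exactly those entering an accepting state.

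For the inductive step, fix a DFA recognising $L$, let $\mu\colon A^*\to M$ be the associated morphism into the finite transition monoid, and let $P\subseteq M$ be the accepting subset, so that $L=\mu^{-1}(P)$. We use two reductions. First, a height-$\le h$ string expression is a finite union of linear pieces $u_0f_1^*u_1\cdots f_k^*u_k$ with $k\le m$, all $|u_i|\le m$, and each $f_i$ of height $\le h-1$ and degree $\le m$; discarding every piece except the one containing $w$ preserves the inclusion in $L$ and can only lower the degree, so it suffices to realise the minimum over single linear pieces. Second, whether $u_0f_1^*u_1\cdots f_k^*u_k\subseteq L$ depends on the languages $f_i$ only through sets $T_i\subseteq M$ with $\mu(f_i)\subseteq T_i$: it holds as soon as $\mu(u_0)\cdot\langle T_1\rangle\cdot\mu(u_1)\cdots\langle T_k\rangle\cdot\mu(u_k)\subseteq P$, where $\langle T\rangle$ is the submonoid of $M$ generated by $T$ (it contains $1$, matching $\varepsilon\in f_i^*$). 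Conversely, if in a decomposition $w=u_0v_1u_1\cdots v_ku_k$ with $v_i\in f_i^*$ every factor $z$ of $v_i$ lies in some height-$\le h-1$, degree-$\le m$ expression contained in $\mu^{-1}(T_i)$, then the union of those witnessing expressions is again a single height-$\le h-1$, degree-$\le m$ expression playing the role of $f_i$, with $\mu(f_i)\subseteq T_i$ and $v_i\in f_i^*$. Thus the proposition reduces to choosing the block decomposition $w=u_0v_1u_1\cdots v_ku_k$, the factorisations $v_i=z_{i,1}\cdots z_{i,p_i}$, and the sets $T_i$ so as to minimise the maximum of $k$, the numbers $|u_i|$, and the degrees needed to cover the $z_{i,j}$ inside the languages $\mu^{-1}(T_i)$.

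By the induction hypothesis, for every $T\subseteq M$ we have a computable cost automaton $\Aa_T$ with counters $0,\dots,h-1$ that maps $w'$ to the least degree of a height-$\le h-1$ expression containing $w'$ and contained in the regular language $\mu^{-1}(T)$. The automaton $\Aa$ for height $h$ has counters $0,\dots,h$. It scans $w$, guessing a marking of it as an alternating sequence of literal and star blocks, and keeps in its finite state a subset $R\subseteq M$ (intended to be the set of $\mu$-images of the expression built so far), starting from $\set 1$. Reading a letter inside a literal block increments counter $0$ and updates the remembered $\mu$-image of the current literal block; a transition that begins a literal block resets counters $0,\dots,h-1$; when the literal block ends, $R$ is multiplied on the right by its $\mu$-image. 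A transition that begins a star block increments counter $h$ — hence, by the hierarchical reset discipline, also resets counters $0,\dots,h-1$ — and guesses a set $T_i\subseteq M$; the star block is then read as a concatenation of pieces, a fresh copy of $\Aa_{T_i}$ being simulated on each piece, with its counters $0,\dots,h-1$ reset at the start of every piece and a required accepting state at the end of every piece; when the star block ends, $R$ is multiplied on the right by $\langle T_i\rangle$. Finally $\Aa$ accepts when $w$ has been consumed at a block boundary and $R\subseteq P$. Correctness follows in both directions from the reductions above: an optimal expression of degree $m$ induces an accepting run of value $\le m$ (counter $h$ reaches $k\le m$, counter $0$ reaches each $|u_i|\le m$, and the simulated runs keep counters $0,\dots,h-1$ below $m$ on every piece), whereas an accepting run of value $v$ yields — through the union-of-witnesses construction for each star block together with the final test $R\subseteq P$ — an expression of degree $\le v$ containing $w$ and contained in $L$; taking minima over runs gives the claimed value of $\sem{\Aa}$, and $M$, $P$, the submonoids $\langle T\rangle$, and the automata $\Aa_T$ are all computable.

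The hard part will be the treatment of the inclusion $e\subseteq L$: one has to observe that it factors through the transition monoid, that it can be verified incrementally by the subset register $R$ rather than by guessing an entire monoid profile in advance, and that each $f_i$ may be taken to be the union of small witnesses for exactly the factors of $v_i$ occurring in $w$, so that no bound on $f_i$ beyond the one the induction already delivers is ever needed. A subtler bookkeeping point is to make the run value \emph{exactly} the minimal degree rather than merely bounded-iff-bounded: this pins down which counter measures what — counter $h$ the number of top-level star blocks, counter $0$ the literal words at every nesting depth, and the intermediate counters the per-level star-block counts — and forces the simulated copies of $\Aa_T$ to be restarted with reset counters at each piece boundary, so one must finally check that the combined transition table still obeys the hierarchical reset discipline at all of these boundaries.
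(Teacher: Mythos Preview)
Your proof is correct and follows essentially the same approach as the paper: induction on $h$, with the base case counting input length, and the inductive step guessing a block decomposition together with monoid subsets $T_i$, tracking the running product in the finite control, using the top counter for the number of blocks, a lower counter for literal-word length, and nesting the inductively obtained automata $\Aa_{T_i}$ (with counters reset between iterations) inside the star blocks. The only cosmetic differences are your use of the trailing-literal shape $u_0 f_1^* u_1 \cdots f_k^* u_k$ versus the paper's $w_1 e_1^* \cdots w_i e_i^*$, and your more explicit correctness argument (the union-of-witnesses point and the exact-degree bookkeeping), which the paper leaves largely implicit.
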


Note that the function $\sem \Aa$ always returns a number $\le |w|$ for word $w \in L$, because the word $w$ itself is a string expression of height $0$ and degree $|w|$.

\newcommand{\lseq}{\mathbf L}
\newcommand{\sseq}{\mathbf s}
\newcommand{\kseq}{\mathbf K}
\newcommand{\pset}{\mathsf P}
\newcommand{\comph}[2]{f_{#1}^{#2}}
\begin{proof}
	The proposition is proved by induction on $h$. The induction base of $h=0$ is straightforward: the cost automaton has one counter, has an accepting run if and  only if the word belongs to $L$, and the value of the counter in this run is the length of the input word.
	
	Let us now do the induction step. 	Suppose that we have proved the proposition for $h-1$, and we want to prove it for $h$. Take $L$ as in the statement of the proposition.
Since $L$ is regular, it is recognised by some homomorphism 
\begin{align*}
\alpha : A^* \to M	
\end{align*}
into a finite monoid. For a subset $N \subseteq M$ define
\begin{align*}
\stringexp N h m  \subseteq A^*
\end{align*}
the union of all  string expressions that have degree at most $m$, height at most $h$, and which are contained in $\alpha^{-1}(N)$. (Actually, this union can be viewed as a single string expression, since string expressions of given height and degree are closed under union.) 

In the lemma below,  we use the following notation: for $w_1,\ldots,w_i \in A^*$ and $N_1,\ldots,N_i \subseteq M$, we write
\begin{align*}
	w_1 N_1^* \cdots w_i N_i^* \subseteq M
\end{align*}
to be all possible elements of the monoid  $M$ that can be obtained by taking a  product $\alpha(w_1) n_1 \cdots \alpha(w_i) n_i$ where each $n_j$ is in the submonoid of $M$ generated by $N_j$. The following lemma follows immediately from the definition of string expressions and the set $[N]^m_h$.
\begin{lem} Let $h,m \in \Nat$, $w \in A^*$  and  $N \subseteq M$. Then $w \in [N]^m_h$ if and only if
	\begin{itemize}
		\item[(*)] 	  there exists a number $i \le m$ and  subsets
	\begin{align*}
		N_1,\ldots,N_i \subseteq M
	\end{align*}
	such that $w$ admits a factorisation
	\begin{align*}
		w= w_1 v_1 w_2 v_2 \cdots w_i v_i 
	\end{align*}
	such that the following conditions hold
	\begin{eqnarray}
				w_1 N_1^*  \cdots w_i N_i^* \subseteq \alpha(L) && \label{eq:monoid-product}\\
		|w_j|  \le  m  && \mbox{for  $j \in \set{1,\ldots,i}$} \label{eq:short-factors}\\
		v_j  \in (\stringexp {N_j} {h-1} m)^*  && \mbox{for $j \in \set{1,\ldots,i}$}\label{eq:induction}.
\end{eqnarray}	

	\end{itemize}
\end{lem}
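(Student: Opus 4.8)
The plan is to prove the equivalence by unfolding the two definitions it involves — that of a string expression of height $\le h$, and that of the set $\stringexp{N}{h-1}{m}$ at the smaller height — with no appeal to the induction hypothesis of the Proposition. The single auxiliary fact I would use is the elementary observation that for any language $K \subseteq A^*$ the monoid image $\alpha(K^*)$ is the submonoid of $M$ generated by $\alpha(K)$; this is precisely what the notation $w_1 N_1^* \cdots w_i N_i^*$ is designed to record.

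For the direction from $w \in \stringexp{N}{h}{m}$ to (*): I would first use that $\stringexp{N}{h}{m}$ is, by definition, a union of string expressions of height $\le h$ and degree $\le m$ contained in $\alpha^{-1}(N)$, and, since these are closed under union, pick one such expression $e$ with $w \in e \subseteq \alpha^{-1}(N)$. As $h \ge 1$, the expression $e$ is a finite union of disjuncts of the form $w_1 e_1^* \cdots w_i e_i^*$ with $i \le m$, each $|w_j| \le m$, and each $e_j$ a string expression of height $\le h-1$ and degree $\le m$. The word $w$ lies in one such disjunct $e'$, and $e' \subseteq e \subseteq \alpha^{-1}(N)$. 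The factorisation witnessing $w \in e'$ gives $w = w_1 v_1 \cdots w_i v_i$ with each $v_j$ in the language of $e_j^*$, so \eqref{eq:short-factors} holds; setting $N_j$ to be the $\alpha$-image of the language of $e_j$, the expression $e_j$ itself witnesses $e_j \subseteq \stringexp{N_j}{h-1}{m}$, hence $v_j \in (\stringexp{N_j}{h-1}{m})^*$ and \eqref{eq:induction} holds; and the $\alpha$-image of the language of $e'$ is exactly $w_1 N_1^* \cdots w_i N_i^*$, so that $e' \subseteq \alpha^{-1}(N)$ is precisely \eqref{eq:monoid-product}.

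For the converse I would start from the data $i$, $N_1,\ldots,N_i$ and the factorisation provided by (*). By the parenthetical remark following the definition of $\stringexp{N}{h}{m}$, each set $\stringexp{N_j}{h-1}{m}$ is the language of a single string expression $e_j$ of height $\le h-1$ and degree $\le m$ contained in $\alpha^{-1}(N_j)$. Then $e' := w_1 e_1^* \cdots w_i e_i^*$ is a string expression of height $\le h$ and degree $\le m$, the bounds $i \le m$ and $|w_j| \le m$ being supplied by \eqref{eq:short-factors}; it contains $w$, by the given factorisation together with \eqref{eq:induction}; and the $\alpha$-image of its language equals $w_1 N_1^* \cdots w_i N_i^*$, which by \eqref{eq:monoid-product} forces $e' \subseteq \alpha^{-1}(N)$. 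Hence $w \in e' \subseteq \stringexp{N}{h}{m}$.

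I do not expect a genuine obstacle; the argument is essentially bookkeeping. The two points deserving a moment's care are, in the first direction, passing from the union $e$ to the single disjunct $e'$ that contains $w$ and observing that containment in $\alpha^{-1}(N)$ is inherited by $e'$, and, in the second direction, remembering that the sets $\stringexp{N_j}{h-1}{m}$ may be treated as bona fide string expressions, so that nesting them under a Kleene star keeps the overall height at $h$ rather than pushing it to $h+1$.
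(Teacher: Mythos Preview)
Your proposal is correct and is exactly the unfolding the paper has in mind: the paper gives no proof beyond the remark that the lemma ``follows immediately from the definition of string expressions and the set $[N]^m_h$'', and your two-direction argument is the natural way to spell this out. One tiny imprecision worth fixing: in the converse direction the $\alpha$-image of the language of $e'$ is only \emph{contained in} $w_1 N_1^* \cdots w_i N_i^*$ (since $\alpha(e_j) \subseteq N_j$, not necessarily equality), but containment is all you need for \eqref{eq:monoid-product} to force $e' \subseteq \alpha^{-1}(N)$.
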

To complete the proof of the proposition,  we will show that for every $N$ and $h$ there exists  a cost automaton, which maps a word $w$ to the smallest  value $m\in \Nat$ which satisfies   condition (*) in the above lemma.
When reading an input word $w$, the automaton proceeds as follows. It uses nondeterminism to guess a factorisation
\begin{align*}
	w = w_1  v_1 w_2 v_2 \cdots w_i v_i ,
\end{align*}
as in the statement of the claim, and it also uses nondeterminism to guess the subsets $N_1,\ldots,N_i$. The subset $N_j$ is only stored in the memory of the automaton while processing the block $v_j$.  Condition~\eqref{eq:monoid-product} is verified using the finite control of the automaton: when the automaton is processing the block $w_j v_j$, its finite memory stores  $N_j$ as well as 
\begin{align*}
	w_1 N_1^*  \cdots w_j N_j^* \subseteq M.
\end{align*}
The remaining conditions are verified using counters. The most important counter is used to the length $i$ of the factorisation, i.e.~it is incremented whenever a block of the form $w_j v_j$ is processed. A less important counter is used to bound the length of the words $w_1,\ldots,w_i$, as required by condition~\eqref{eq:short-factors}, i.e.~it is incremented whenever reading a single letter of $w_i$, and reset after finishing $w_i$.   Finally, for condition~\eqref{eq:induction}, when processing $w_j$ we use the inductively obtained  cost automaton $\Aa_{N_j}$, which works for the smaller height $h-1$. This inductively obtained cost automaton is run in a loop,  corresponding to the Kleene star  in condition~\eqref{eq:induction}, and after each iteration of this loop all of its counters are reset.
	 \end{proof}

Theorem~\ref{thm:reduction}, which reduces the star height problem to the limitedness problem, is a straightforward corollary of the above proposition. 
		 Suppose that we want to decide if a regular language $L$ has star height $h$. 
		 Apply Proposition~\ref{lem:univaut} to $L$ and $h$, yielding a cost automaton. We claim that  $L$ has star height $h$ if and only if the cost automaton is limited over $L$. 
		 
		 If the cost automaton is limited over $L$, say  every word in $L$ admits a run with value at most $m$, then by the proposition, we know that every word $w \in L$ belongs to some string expression $e$ which has height at most $h$ and degree at most $m$. There are finitely many string expressions of given height and degree, and therefore one can take the union of all of these string expressions, yielding a regular  expression (in fact, a string expression) of star height at most $h$  which defines $L$.
		 
		 If $L$ has star height $h$, then it is defined by a regular expression of star height $h$, and therefore also by a string expression of the same height (and some degree), and the cost automaton is limited by the degree.
		 
		 This completes the reduction of the star height problem to the limitedness problem.

	\section{The limitedness problem is decidable}
	\label{sec:limitedness}
	The reduction in the previous section was used already in the proof of Kirsten, and is included here only to make this paper self-contained. This section contains the original contribution of the paper, namely an alternative decidability argument for  the limitedness problem. We show that the limitedness problem  reduces to solving Gale-Stewart games with $\omega$-regular winning conditions,  a problem which was shown decidable by B\"uchi and Landweber~\cite{BuchiLandweber69}. Also in this section, most of the material is not new: apart from using the  the B\"uchi-Landweber theorem as a target of the reduction, the reduction itself is mainly based on a technique from~\cite{ColcombetLoding08}.
	\subsection*{Gale-Stewart Games}  A \emph{Gale-Stewart game} is a game of perfect information and infinite duration played by two players, call them A and B. Each player has their own finite alphabet, call these alphabets $A$ and $B$. The game is played in rounds; in each round player A picks a letter  from alphabet $A$ and player B responds with a letter from alpabet $B$. After $\omega$ rounds have been played, the game results in  an $\omega$-word of the form $a_1b_1 a_2 b_2 \cdots$ with $a_1,a_2, \ldots \in A$ and $b_1,b_2,\ldots \in B$.    The objective of player B is to ensure that the resulting sequence belongs to a given set of $\omega$-words, called the \emph{winning condition for player B}, while the objective of player A is to avoid this.
	
A special case of Gale-Stewart games that is important in computer science  is when the winning condition is $\omega$-regular, i.e.~it is recognised by a nondeterministic B\"uchi automaton, or any other equivalent model.
Let us briefly recall nondeterministic B\"uchi automata, following~\cite[Definition 5.1]{Thomas}. The syntax of such an automaton is exactly the same as for a nondeterministic automaton on finite words. An $\omega$-word over the input alphabet is said to be accepted by the automaton if there is a run (i.e.~a labelling of word positions with transitions which is consistent with the transition relation in the usual sense) which begins in some initial state and uses some accepting state infinitely often. Languages recognised by nondeterministic B\"uchi automata are called $\omega$-regular; this is a robust class of languages which in particular is closed under Boolean operations.
A problem known as the  \emph{Church synthesis problem} is to decide, given a Gale-Stewart game with an $\omega$-regular winning condition, which of the players, if any, has a winning strategy. This problem was solved by B\"uchi and Landweber, who proved  the following theorem.  

\begin{thm}\label{thm:} \cite{BuchiLandweber69}
	Consider a Gale-Stewart game with an $\omega$-regular winning condition.
	\begin{itemize}
		\item The game is determined, i.e.~one of the players has a winning strategy.
		\item One can decide which player has a winning strategy.
		\item The player with a winning strategy has a finite memory winning strategy (see proof of Lemma~\ref{prop:run-system-characterisation} for a definition).
	\end{itemize}
\end{thm}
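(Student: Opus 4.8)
The plan is to reduce the game to a \emph{parity game} played on a finite graph, for which determinacy, decidability, and memoryless strategies are all available. The first step is \textbf{determinization}: by McNaughton's theorem, proved effectively via Safra's construction, every $\omega$-regular language is recognised by a \emph{deterministic} automaton with a parity acceptance condition, i.e.\ one whose states carry natural-number priorities and which accepts a run iff the least priority seen infinitely often is even. Replacing the nondeterministic B\"uchi automaton for the winning condition of player B by such a deterministic parity automaton $\mathcal D$ does not change who wins, but now the acceptance status of a play depends only on the sequence of states visited by $\mathcal D$ while reading it.

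Next I would \textbf{build the game arena}. Its positions are pairs consisting of a state of $\mathcal D$ together with a bit recording whose turn it is. From an ``A-position'' carrying $\mathcal D$-state $q$, player A chooses a letter $a\in A$, passing to the ``B-position'' $(q,a)$; from there player B chooses $b\in B$, passing to the A-position whose $\mathcal D$-state is $\delta(q,ab)$ (reading $ab$ as two consecutive input letters, or one letter of the product alphabet). The priority of a position is the priority of its $\mathcal D$-state. An infinite play in this arena corresponds exactly to a play $a_1b_1a_2b_2\cdots$ of the Gale-Stewart game paired with the run of $\mathcal D$ on it, and player B wins the play iff the least priority occurring infinitely often is even. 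Thus the original game is equivalent to this finite parity game.

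The third step invokes the \textbf{theory of parity games}: every parity game on a finite arena is determined, the winner has a \emph{positional} (memoryless) winning strategy, and the winning regions --- hence the winner from the initial position --- can be computed, e.g.\ by Zielonka's recursive algorithm (this may be taken as known, or reproved by induction on the number of priorities). Determinacy of the parity game yields the first bullet. For the third bullet, a positional strategy in the arena, translated back, is a strategy for the Gale-Stewart game whose only memory is the current state of $\mathcal D$, which is exactly a \emph{finite-memory} strategy: one computed by a finite transducer that reads the opponent's letters and emits its own. Computability of the parity-game winner, combined with the effectiveness of Safra's construction, gives the second bullet.

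The main obstacle is the determinization step. The subset construction that works on finite words fails for B\"uchi automata, and Safra's construction --- maintaining a tree of subsets with a subtle ``breakpoint'' bookkeeping --- is the technically delicate ingredient, while everything downstream (solving parity games, translating a positional strategy into a finite-memory Gale-Stewart strategy) is comparatively routine. An alternative that avoids explicit determinization is the original route of B\"uchi and Landweber through Rabin's decidability theorem for the monadic second-order theory of the infinite binary tree, but that theorem is itself of comparable difficulty.
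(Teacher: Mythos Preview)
Your proposal is correct and matches the paper's own treatment. The paper does not actually prove this theorem: it cites it as the B\"uchi--Landweber result and merely remarks that ``a modern proof \ldots\ consists of two results: memoryless determinacy of parity games, and conversion of nondeterministic B\"uchi automata into deterministic parity automata.'' Your outline is precisely a fleshing-out of that two-step sketch --- Safra-style determinisation to a deterministic parity automaton, then reduction to a finite parity game whose positional strategies translate back to finite-memory strategies --- so there is nothing to add.
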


This is a non-trivial theorem and its proof is not included here. A modern proof, see~\cite[Theorem 6.16]{Thomas}, consists of  two results: memoryless determinacy of parity games, and conversion of nondeterministic B\"uchi automata   into deterministic parity automata. 
Using the B\"uchi-Landweber theorem, we show below that  the limitedness problem is decidable, because it reduces to determining the winner in a Gale-Stewart game with an $\omega$-regular winning condition.
\begin{thm}\label{thm:main}The limitedness problem reduces to solving Gale-Stewart games with $\omega$-regular winning conditions.
\end{thm}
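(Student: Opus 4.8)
The plan is to reduce the limitedness problem to a Gale--Stewart game $G$ with an $\omega$-regular winning condition and then invoke the B\"uchi--Landweber theorem (Theorem~\ref{thm:}). Given a cost automaton $\Aa$ with hierarchical counters $0,\ldots,n$ and a regular language $L$, the game $G=G(\Aa,L)$ will be designed so that $\Aa$ is limited over $L$ if and only if player B has a winning strategy, which by Theorem~\ref{thm:} is decidable. Intuitively the game pits the two possible answers to the limitedness problem against each other: player A tries to witness that $\Aa$ is \emph{not} limited over $L$ by presenting an infinite sequence of harder and harder words of $L$, while player B tries to witness that it \emph{is} limited by responding with accepting runs whose counters stay under control.

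Concretely, I would let player A play, letter by letter, an $\omega$-word of the form $w_1 \# w_2 \# \cdots$ — a concatenation of finite words separated by a fresh symbol $\#$, a play using $\#$ only finitely often being awarded to player B — and let player B answer each letter of $\Aa$'s input alphabet by a transition of $\Aa$ on that letter, thereby producing a run $\rho_j$ of $\Aa$ on each $w_j$. Player B wins the play if three conditions hold: the $\rho_j$ are syntactically valid runs; every $w_j$ that lies in $L$ receives an accepting run $\rho_j$ (this is finite-state checkable, e.g.~by running a deterministic automaton for $L$ in parallel); and a third condition — the technical heart of the construction — which is $\omega$-regular and is built from the total order on the counters so as to express \emph{faithfully} that the values of the runs stay bounded. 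A naive formulation such as ``the counter values along the play are bounded'' is provably not $\omega$-regular, and the whole point of exploiting the counter hierarchy, following the technique of~\cite{ColcombetLoding08}, is to obtain an $\omega$-regular substitute: roughly, one tracks for each counter whether it is reset ``often enough'' relative to the higher counters (which, by the hierarchy, also reset it), and packages this as a parity/Muller condition on a finite abstraction of the play.

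For correctness I would argue the two directions separately, using determinacy for the harder one. If $\Aa$ is \emph{not} limited over $L$, then for each $m$ there is a word $u_m\in L$ every accepting run of which has value exceeding $m$; playing $u_1 \# u_2 \# \cdots$ is then a strategy for player A that beats every strategy of player B, since player B is forced either to fail to accept some $u_m\in L$, violating the second condition, or to let some counter be incremented more and more without sufficiently many resets, violating the third; hence player B has no winning strategy. Conversely, suppose $\Aa$ \emph{is} limited over $L$, say by a bound $m$; I claim player A has no winning strategy, which by determinacy (Theorem~\ref{thm:}) hands player B a winning one and completes the reduction. To prove the claim I would confront a hypothetical winning strategy of player A with a ``saturating'' behaviour of player B, one that keeps alive all partial runs of value at most $m$ that can still lead to acceptance; since limitedness guarantees that every word of $L$ has an accepting run of value $\le m$, in a resulting play player B cannot lose through the first two conditions, so she must lose through the third, i.e.~the play exhibits some counter growing unboundedly while the runs on the $L$-words remain accepting. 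From such a play one then extracts, by a Ramsey / factorisation-forest argument in the style of~\cite{Simon88b} and~\cite{ColcombetLoding08}, an infinite family of words of $L$ whose minimal accepting-run value tends to infinity, contradicting the bound $m$.

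The step I expect to be the main obstacle is exactly this reconciliation of the existential bound in the definition of limitedness with the $\omega$-regularity requirement on a Gale--Stewart winning condition: designing the third condition so that it is simultaneously $\omega$-regular and faithful, and then — in the ``$\Aa$ limited $\Rightarrow$ player A loses'' direction — turning a single infinite play on which that condition fails into a genuine family of finite counterexamples to limitedness. This conversion is where the hierarchical structure of the counters and a pumping argument in the style of~\cite{ColcombetLoding08} are essential; once the third condition is in place, checking that the overall winning condition is $\omega$-regular, and proving the ``not limited'' direction, are comparatively routine, and the final appeal to Theorem~\ref{thm:} is immediate.
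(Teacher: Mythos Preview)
Your game design has a genuine gap: you let player B answer each letter with a \emph{single} transition of $\Aa$, committing to one run $\rho_j$ per word $w_j$. But player A builds $w_j$ adaptively, seeing B's transitions so far, and can steer the word away from whichever run B has started---even when $\Aa$ is limited over $L$. A tiny example suffices: from the initial state on $a$ the automaton goes nondeterministically to a state that accepts only $b$ next or to one that accepts only $c$ next, and $L=\{ab,ac\}$; both words have accepting runs of value $0$, yet after B commits on $a$, player A plays the other letter and B fails your second condition. So ``limited $\Rightarrow$ B wins'' is false for your game as stated. You seem to sense this when you later invoke a ``saturating behaviour \ldots\ keeping alive all partial runs of value at most $m$'', but that is exactly what a single-transition reply cannot do; the saturation belongs in the game, not only in the correctness argument.

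The paper fixes precisely this by having B reply with a \emph{set} $\delta_i$ of transitions over the current letter, so that $\delta_1\cdots\delta_i$ encodes a whole family of runs; the acceptance clause only asks that \emph{some} run in this family be accepting whenever the prefix $a_1\cdots a_i$ lies in $L$. No $\#$ separators are used: one infinite word suffices, with the acceptance clause quantifying over all finite prefixes. The $\omega$-regular surrogate for boundedness (your ``third condition'') is then simply that every infinite run threaded through $\delta_1\delta_2\cdots$ which increments some counter infinitely often also resets it infinitely often. For ``limited $\Rightarrow$ B wins'' the paper does not go via determinacy and a Ramsey extraction as you propose; it exhibits an explicit strategy for B built from a \emph{monotone scoring function} on finite runs---a base-$(m{+}1)$ encoding of the counter values with carries, following~\cite{ColcombetLoding08}---and has B keep exactly the score-optimal runs alive; monotonicity is what makes this set closed under prefixes and hence maintainable step by step. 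The passage from the $\omega$-regular game back to a concrete finite bound uses the finite-memory clause of B\"uchi--Landweber together with a short pumping argument on the product of the strategy automaton and the cost automaton, rather than extracting an infinite family of hard words.
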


Fix a cost automaton $\Aa$ and a regular language $L$, both  over the same input alphabet. 
Let us begin by fixing some notation. We view a run of the cost automaton, not necessarily accepting, as a sequence of transitions  such that the source state of the first transition is the initial state, and which is consistent in the usual sense: for every transition in the run, its target state is the same as the source state of the following transition. We will also talk about infinite runs.

In the proof, we will be often be using sets of transitions. It is convenient to visualise a set of transitions as a bipartite graph according to the following picture:
\mypic{5}
Note that in the above picture, all transitions in the set use the same input letter; this will always be the case below.
 If $\delta_1,\ldots,\delta_n$ is a sequence of  sets of transitions in the cost automaton, then a run in $\delta_1 \cdots \delta_n$ is defined to be any run where the $i$-th transition is from the set $\delta_i$; likewise for infinite sequences of sets of transitions. Here is a picture:
 \mypic{6} 
 
%

 For a number $m \in \Nat \cup \set \infty$, consider the following Gale-Stewart game, call it the \emph{limitedness game with bound $m$}.
The alphabet for player A is the input alphabet of the cost automaton. The alphabet for player B is 
\begin{align*}
  \bigcup_{a \in \text{input alphabet}} \set{ \delta : \text{$\delta$ is a set of transitions that read letter $a$}}.
\end{align*}
The idea is that player A constructs an infinite word, and player B constructs a representation of a set  of runs that is sufficient to accept  prefixes of this word with  counter values that are small with respect to $m$. This idea is formalised by the winning condition, defined as follows. When $m$ is a finite number, then the winning condition for player B is the set of sequences
\begin{align*}
	a_1 \delta_1 a_2 \delta_2 \cdots
\end{align*}
such that $a_1,a_2,\ldots$ are letters and $\delta_1,\delta_2,\ldots$ are sets of transitions, and the following conditions  hold for every $i$:
	\begin{enumerate}
		\item  every transition in the set $\delta_i$ is over letter $a_i$; and
	\item every run in $\delta_1 \cdots \delta_i$ has value at most $m$; and
	 	\item if $a_1 \cdots a_i \in L$ then  some run in $\delta_1 \cdots \delta_i$ is accepting.
	\end{enumerate}
In particular, the winning condition implies that if $a_1 \cdots a_i$ belongs to $L$, then it admits an accepting run of the cost automaton which has value at most $m$.
For $m = \infty$,  item 2)  is replaced by 
\begin{enumerate}
	\item[2')] for every infinite run in $\delta_1 \delta_2 \cdots$  and every counter $c$, if the run increments $c$  infinitely often, then it resets $c$ infinitely often.
\end{enumerate}

The following lemma is also true for finite values of $m$, but we only use the case of $m=\infty$ in the reduction.
 \begin{lem}\label{lem:}
	For $m =\infty$ the winning condition is $\omega$-regular.
 \end{lem}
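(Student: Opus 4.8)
The plan is to show that the winning condition for player B, in the case $m = \infty$, can be recognised by a nondeterministic B\"uchi automaton reading the $\omega$-word $a_1 \delta_1 a_2 \delta_2 \cdots$. Conditions 1) and 3) are safety-type, respectively liveness-type, requirements that are clearly $\omega$-regular: condition 1) is a local check on each pair $(a_i,\delta_i)$, and condition 3) can be tracked by a finite-state component that keeps, after reading $\delta_1 \cdots \delta_i$, the set of states reachable by some run in $\delta_1 \cdots \delta_i$ (a subset-construction component over the cost automaton), together with a parallel component remembering the state of a DFA for $L$; whenever the $L$-component reports membership, one checks that the reachable-set component contains an accepting state. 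Since $\omega$-regular languages are closed under intersection, it suffices to handle condition 2') separately.

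For condition 2'), the point is that the \emph{counter values play no role} — only the pattern of increments and resets along infinite runs matters — so the numerical content of the cost automaton can be discarded and we are left with a purely combinatorial, $\omega$-regular property. Concretely, I would build a nondeterministic B\"uchi automaton that, reading $a_1 \delta_1 a_2 \delta_2 \cdots$, guesses a single infinite run in $\delta_1 \delta_2 \cdots$ (at each step it stores the current state and nondeterministically picks a transition from $\delta_i$ whose source is that state) together with the index of a counter $c$, and then verifies that this chosen run increments $c$ infinitely often but resets $c$ only finitely often; such a run is exactly a witness that condition 2') \emph{fails}. This automaton uses a B\"uchi acceptance condition of the form ``infinitely many increments of $c$'' guarded by an eventual-safety condition ``no more resets of $c$'', which is a standard $\omega$-regular (indeed deterministic co-B\"uchi composed with B\"uchi, i.e.\ Rabin) pattern; in any case it is recognisable by a nondeterministic B\"uchi automaton. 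Thus the complement of condition 2') is $\omega$-regular, and since $\omega$-regular languages are closed under complementation, condition 2') itself is $\omega$-regular. Intersecting with the automata for conditions 1) and 3) gives the claim.

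The only mildly delicate point is the phrase ``every infinite run in $\delta_1 \delta_2 \cdots$'': a priori a run is an infinite object and one might worry whether the set of all such runs is itself finitely describable. But the automaton above never needs to enumerate all runs at once — it merely needs to decide whether \emph{at least one} bad run exists, which nondeterminism handles by guessing that run transition-by-transition. The finiteness of the transition set of $\Aa$ is what makes each single step a finite choice, and K\"onig's lemma is not needed because we are producing, not analysing, the infinite run. So the ``hard part'' is essentially just being careful that the complementation is applied to the right sub-condition (2') rather than to the whole winning condition, and that the reachable-state bookkeeping for condition 3) is over the \emph{existence} of a run in $\delta_1 \cdots \delta_i$, not over a single guessed run; once these are separated, each component is a routine finite-automaton construction and the closure properties of $\omega$-regular languages finish the argument.
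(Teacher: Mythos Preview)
Your proposal is correct and follows essentially the same approach as the paper: handle conditions 1) and 3) by straightforward safety automata (the paper likewise uses a subset-of-reachable-states component for 3)), and handle 2') by building a nondeterministic B\"uchi automaton for its \emph{complement} that guesses a single infinite run together with a counter witnessing a violation, then appeal to closure under complementation and intersection. The only differences are cosmetic---you spell out the DFA for $L$ and the guessed counter index explicitly, and add commentary on why enumerating all runs is unnecessary---none of which departs from the paper's argument.
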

 \begin{proof} The class of $\omega$-regular languages is closed under intersections and complements, and therefore it suffices to give separate nondeterministic B\"uchi automata for items 1), 2') and the complement of item 3). For items 1) and 3) one only needs a deterministic automaton with an acceptance condition, sometimes called the \emph{trivial acceptance condition}, where a run is accepting unless as long as it never  visits some designated sink state. (This can be easily converted into a B\"uchi automaton, by making all states accepting except for the sink state.) For item 1), only one additinal state except the sink state is needed; for item 3) it suffices to compute the set of states of the cost automaton that are accessible via runs in $\delta_1 \cdots \delta_i$. It remains to deal with the complement of 2').  The complement of item 2') is recognised by a nondeterministic B\"uchi automaton whose number of states is polynomial in the number of counters; the B\"uchi automaton nondeterministically  guesses a run of the cost automaton which increments a counter infinitely often without resetting it. 
 \end{proof}
 
In particular, thanks to the B\"uchi-Landweber Theorem, one can decide the winner in the limitedness game with bound~$\infty$. 
Therefore, Theorem~\ref{thm:main} will follow from Proposition~\ref{prop:run-system-characterisation} below, which reduces the limitedness problem to finding the winner in the limitedness  game with bound $\infty$. 
		
		\begin{prop}\label{prop:run-system-characterisation}
The following  are equivalent:
			\begin{enumerate}
				\item[(a)] The cost automaton $\Aa$ is limited over language $L$;
				\item[(b)] Player B has a winning strategy in the limitedness game with some finite bound  $m< \infty$.
								\item[(c)] Player B has a winning strategy in  the limitedness game with bound $\infty$.
			\end{enumerate}
		\end{prop}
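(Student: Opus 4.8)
The plan is to prove the cycle of implications $(a)\Rightarrow(b)\Rightarrow(c)\Rightarrow(a)$; of these, only the last one requires genuine work, and it is the place where the B\"uchi--Landweber theorem is used. For $(a)\Rightarrow(b)$, suppose $\Aa$ is limited over $L$ with bound $m$, and let player B follow a ``powerset with bounded counters'' strategy: after A has played $a_1\cdots a_i$, maintain the finite set $S_i$ of all pairs (state of $\Aa$, counter valuation) reachable by the partial runs played so far, discarding any pair in which some counter exceeds $m$. Formally $S_0$ consists of the initial state with all counters zero, and $S_i$ is obtained from $S_{i-1}$ by applying, in all possible ways, the transitions over $a_i$ that do not push a counter above $m$; player B plays $\delta_i$ equal to the set of exactly those transitions. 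Items 1 and 2 of the winning condition are immediate, since every partial run in $\delta_1\cdots\delta_i$ ends in an element of $S_i$ and hence has value at most $m$. For item 3, the point is that $\delta_i$ contains \emph{all} the safe transitions, so whenever $a_1\cdots a_i\in L$ the accepting run of value $\le m$ supplied by limitedness is itself a run in $\delta_1\cdots\delta_i$; this is verified by a short induction on $i$.

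The implication $(b)\Rightarrow(c)$ comes for free once one observes that the winning condition for a finite bound $m$ is contained in the one for bound $\infty$: items 1 and 3 are literally the same, and item 2 implies item 2', because an infinite run incrementing some counter $c$ infinitely often but resetting it only finitely often would, past its last reset of $c$, eventually accumulate more than $m$ increments without a reset, yielding a finite prefix of value greater than $m$. This leaves $(c)\Rightarrow(a)$. Using the B\"uchi--Landweber theorem, fix a \emph{finite-memory} winning strategy $\sigma$ for player B in the bound-$\infty$ game; here a finite-memory strategy consists of a finite memory set $R$ with a designated initial memory, a map updating the memory after each of A's moves, and a map producing B's next move from the current memory and A's current letter, so that B's behaviour depends only on the current memory state. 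Write $N$ for the number of states of $\Aa$ and set $m:=|R|\cdot N$. Given $w=a_1\cdots a_k\in L$, let A play $w$ followed by an arbitrary infinite suffix, and let $\delta_1\delta_2\cdots$ be B's response under $\sigma$; item 3 with $i=k$ yields an accepting run $\rho$ of $\Aa$ over $w$ inside $\delta_1\cdots\delta_k$, and it remains to bound the value of $\rho$ by $m$.

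Assume for contradiction that $\rho$ has value greater than $m$, witnessed by a counter $c$ and a block of positions, i.e.\ an interval, along which $c$ is never reset but is incremented more than $m=|R|\cdot N$ times. Record, at each of these increments, the pair formed by the state of $\rho$ and the memory state of $\sigma$ at the position just before that increment; by pigeonhole two of these increments, occurring at positions $s+1$ and $t+1$ that both lie inside the block and satisfy $s<t$, are preceded by the same pair. Now let player A instead play the ultimately periodic word $a_1\cdots a_s\,(a_{s+1}\cdots a_t)^\omega$. Since $\sigma$ has finite memory and that memory coincides at positions $s$ and $t$, player B's response is the ultimately periodic sequence $\delta_1\cdots\delta_s\,(\delta_{s+1}\cdots\delta_t)^\omega$; and since the state of $\rho$ coincides at $s$ and $t$, the ultimately periodic sequence of transitions $\rho_1\cdots\rho_s\,(\rho_{s+1}\cdots\rho_t)^\omega$ is a bona fide infinite run inside it. This run increments $c$ at least once per period and, as the repeated segment lies inside the reset-free block, never resets $c$ after position $s$ --- contradicting item 2' for the winning strategy $\sigma$. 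Hence $\rho$ has value at most $m$, so $\Aa$ is limited over $L$.

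I expect the main obstacle to be $(c)\Rightarrow(a)$, and inside it the pumping step: the two delicate points are that the \emph{finiteness} of the memory of $\sigma$ is exactly what makes B's response periodic once A's input is made periodic --- so that the offending run can be iterated forever --- and the bookkeeping required to ensure the repeated segment introduces no reset of $c$. Everything else is routine, and the dependence of the bound $m$ on the size of $\sigma$ is harmless, since the statement only claims the existence of \emph{some} finite bound.
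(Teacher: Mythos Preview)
Your implications $(b)\Rightarrow(c)$ and $(c)\Rightarrow(a)$ are correct; in fact your pumping argument for $(c)\Rightarrow(a)$ is essentially the paper's proof of $(c)\Rightarrow(b)$ (the paper shows that the finite-memory winning strategy for bound $\infty$ already wins with the finite bound $m=|R|\cdot N$, which yields both $(b)$ and $(a)$ at once). So contrary to your closing remark, this is not where the difficulty lies.

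The genuine gap is in your $(a)\Rightarrow(b)$. The ``powerset with bounded counters'' strategy does not win the limitedness game, not even with a bound larger than $m$. The faulty sentence is ``every partial run in $\delta_1\cdots\delta_i$ ends in an element of $S_i$'': a transition $t$ enters $\delta_i$ as soon as it is safe for \emph{some} configuration in $S_{i-1}$, but every run in $\delta_1\cdots\delta_{i-1}$ ending in the source state of $t$---including those with a \emph{worse} counter valuation---can then be extended by $t$, and the result need not lie in $S_i$. Concretely, take one counter, states $p,q$, transitions $t_1\colon p\to p$ (increment) and $t_2\colon p\to p$ (reset) over $a$, and $t_3\colon p\to q$ (increment) over $b$, with $q$ final. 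Over $L=a^*b$ the automaton is limited with minimal bound $m=1$ (reset all the way, then $t_3$). Your strategy with $m=1$ yields $S_i=\{(p,0),(p,1)\}$ and hence $\delta_i=\{t_1,t_2\}$ for every $a$-round $i\ge 1$; but then $t_1^{\,n}\in\delta_1\cdots\delta_n$ has value $n$, so item~2 fails for \emph{every} finite bound.

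This is precisely why the paper's $(a)\Rightarrow(b)$ is more elaborate: among the runs ending in a given state one must keep only a single representative, chosen so that the selection is stable under extension. The paper does this via a \emph{monotone scoring function}, i.e.\ a map from finite runs to $\Nat\cup\{\infty\}$ giving finite score to runs of value $\le m$, infinite score to runs of value $>m'$ for some fixed $m'\ge m$, and satisfying $\score(\rho)\le\score(\rho')\Rightarrow\score(\rho\sigma)\le\score(\rho'\sigma)$ whenever $\rho,\rho'$ end in the same state. Player B then plays only the last transitions of \emph{optimal} runs (minimal finite score for their end state); monotonicity guarantees that $\delta_1\cdots\delta_i$ is exactly the set of optimal runs, all of value $\le m'$. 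For one counter the score is just the current counter value (and $m'=m$), but with several counters one needs a carry mechanism---reading the counter tuple as a base-$(m+1)$ numeral with carries---and $m'$ grows to roughly $(m+1)^{n+1}$. In short, you correctly handled the step that uses B\"uchi--Landweber, but underestimated $(a)\Rightarrow(b)$, which is where the real combinatorics of the proof sits.
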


		A closer analysis shows that condition (c)  can be decided in  exponential time, because the game has a  winning condition whose complement is recognised by a polynomial size B\"uchi automaton, and therefore (using determinisation) the game corresponds to an exponential  size parity game with a polynomial number of ranks~\cite{Thomas}. 
Therefore, our reduction gives an exponential time solution to the limitedness problem, which is inferior to Kirsten's optimal polynomial space solution~\cite{Kirsten05}.
		
		The rest of this paper is devoted to showing Proposition~\ref{prop:run-system-characterisation}.
			We prove equivalence of (b) and (c), and  equivalence of (a) and (b). 
			
				\subsection*{Equivalence of (b) and (c)} The implication  (b) $\Rightarrow$ (c) is immediate. Let us do    (c) $\Rightarrow$ (b).  Since the  limitedness game with bound $\infty$ is a Gale-Stewart  game with an $\omega$-regular winning condition, the B\"uchi-Landweber theorem implies that if player B has a winning strategy, then player B has a finite memory winning strategy. Let us recall the definition of a finite memory winning strategy for  player B. This is  a deterministic finite automaton $\Bb$ whose input alphabet is the alphabet of player A (in our application: the  input  alphabet of the cost automaton) together with a function  $f$ from states of $\Bb$ to the alphabet of player B (in our application: sets of transitions of the cost automaton which share a common input letter)
				such that  when player A chooses letters $a_1, \ldots, a_n$ in the first $n$ rounds, then the  letter chosen by player B in the $n$-th round  is obtained by applying $f$  to the state of $\Bb$ after reading the word $a_1 \cdots a_n$.  

			To prove condition (b) in Proposition~\ref{prop:run-system-characterisation},  consider  the  strategy defined by a deterministic  automaton $\Bb$ which is winning in the limitedness game with bound $\infty$. We will show that when  player B uses this same strategy, the resulting runs have value at most $m$, where  $m$ is the number of states in the cost automaton  times the number of states in the deterministic automaton $\Bb$ defining the strategy.   Toward a contradiction, suppose that player A chooses letters $a_1, a_2,  \ldots$,  player B uses the finite memory strategy given by $\Bb$ to  respond with sets of transitions $\delta_1,\delta_2,\ldots$, and the winning condition for bound $m$ is violated. 
			 This means that for some $n \ge m$,  some finite run $\rho$ in  $\delta_1 \cdots \delta_n$  
			 has  value at least $m$. By the pigeon hole principle, there  must exist $i,j$ with  $i < j$  such that:  in the run of $\Bb$ on $a_1 \cdots a_n$ the same state is used in positions $i$ and $j$; in the run $\rho$ the same state is used in positions $i$ and $j$; and one of the counters is incremented but not reset  between positions $i$ and $j$. Here is a picture:
			 \mypic{7} 
			 Therefore, if player A chooses the infinite word $a_1 \cdots a_{i} (a_{i+1} \cdots a_j)^\omega$, then the response of player B
			 will contain a run where one of the counters is incremented infinitely often but reset finitely often, a contradiction.

						\newcommand{\score}{\mathrm{score}}
			
			\subsection*{Equivalence of (a) and (b)} The implication  (b) $\Rightarrow$ (a) is immediate. To prove  (a) $\Rightarrow$ (b), we use  ideas from~\cite{ColcombetLoding08}. Assume that the automaton is limited over $L$,  i.e.~there is some $m \in \Nat$ such that every word in $L$ admits an accepting run with value at most $m$. For a natural number $m' \ge m$,  define a \emph{scoring function with bound $m'$}  to be a function which maps every finite run to a number in $\Nat \cup \set \infty$, called its \emph{score},  such that runs with value $\le m$ have finite score and  runs with value $>m'$ have infinite score. No constraints are imposed on runs with values between $m$ and $m'$.
A scoring function is called \emph{monotone}  if 
							 \begin{align*}
				\score(\rho) \le \score(\rho') \quad \Rightarrow \quad \score(\rho \sigma) \le \score(\rho' \sigma )
			\end{align*}
			holds for every runs $\rho, \rho'$ that  end in the  same state, and every run $\sigma$  that begins in this state.

			To prove condition (b),  Lemma~\ref{lem:what-loding-needs} shows that the existence of a monotone scoring function is a sufficient condition for player B winning the limitedness game with a finite   bound;    Lemma~\ref{lem:what-loding-has} shows that this sufficient condition can be met.  
\begin{lem}\label{lem:what-loding-needs}
	If there exists a monotone scoring function with bound $m'$, then 
	player B has a winning strategy in the limitedness game with  bound $m'$.

\end{lem}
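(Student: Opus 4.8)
The plan is to let player B commit, round by round, to a set of runs of $\Aa$ that keeps exactly one ``best'' run per reachable state, where ``best'' is measured by the given scoring function; monotonicity is precisely the property that makes discarding all but the best run to each state harmless. So the heart of the argument is an invariant, maintained by player B, about the set of runs encoded by her moves so far.

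Concretely, suppose player A has played $a_1 \cdots a_i$ and player B has answered $\delta_1, \ldots, \delta_i$, and write $R_i$ for the set of runs in $\delta_1 \cdots \delta_i$. Player B plays so as to maintain the invariant that $R_i$ contains, for each state $q$ of $\Aa$, at most one run ending in $q$ --- call it $\rho^i_q$ when it exists --- and that $\rho^i_q$ has minimal $\score$ among all \emph{finite-score} runs of $\Aa$ over $a_1\cdots a_i$ ending in $q$, with $q$ simply absent from $R_i$ when no such finite-score run exists. Given a new letter $a_{i+1}$ from player A, player B computes, for each state $q$, the minimum of $\score(\rho^i_p\,\tau)$ over all transitions $\tau$ on letter $a_{i+1}$ whose source $p$ has $\rho^i_p$ defined and whose target is $q$; if this minimum is finite, player B fixes one transition $\tau_q$ attaining it, and otherwise $q$ is discarded; then $\delta_{i+1}$ is the set of all the chosen $\tau_q$. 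A direct check (inductively, starting from $R_0 = \set{\text{empty run}}$) shows $R_{i+1} = \set{\rho^i_p\,\tau_q : \tau_q\text{ chosen},\ p\text{ the source of }\tau_q}$, so the ``at most one run per state'' part of the invariant is automatic, and the strategy depends only on the word $a_1\cdots a_i$ played so far, hence is a legitimate strategy.

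The substance of the proof is the invariant, which I would prove by induction on $i$: for every run $\rho$ of $\Aa$ over $a_1 \cdots a_i$ \emph{all of whose prefixes have finite score}, $R_i$ contains a run ending in the same state as $\rho$ and of score at most $\score(\rho)$. The base case $i=0$ is trivial, as the only run is the empty run, whose value $0$ gives it finite score. For the inductive step, write $\rho = \bar\rho\,\tau$ with $\tau$ a single transition; every prefix of $\bar\rho$ has finite score, so by induction $R_{i-1}$ contains a run $\bar\rho'$ ending in the same state $p$ as $\bar\rho$ with $\score(\bar\rho') \le \score(\bar\rho)$, and since $R_{i-1}$ has at most one run per state, $\bar\rho' = \rho^{i-1}_p$. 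Now \emph{monotonicity} gives $\score(\bar\rho'\,\tau) \le \score(\bar\rho\,\tau) = \score(\rho) < \infty$, so $\bar\rho'\,\tau$ is one of the candidates inspected when player B builds $\delta_i$ for the endpoint $q$ of $\rho$; hence $\rho^i_q$ is defined with $\score(\rho^i_q) \le \score(\bar\rho'\,\tau) \le \score(\rho)$, as required.

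Finally I would check the three clauses of the winning condition for bound $m'$. Clause 1 holds by construction of each $\delta_i$. For clause 2, every run in $R_i$ has finite score by the strategy invariant, and a run of value $>m'$ has infinite score, so every run in $R_i$ has value at most $m'$. For clause 3, assume $a_1\cdots a_i \in L$; since $\Aa$ is limited over $L$ by the bound $m \le m'$ built into the scoring function, there is an accepting run $\rho$ over $a_1\cdots a_i$ of value at most $m$, so all prefixes of $\rho$ have value at most $m$ and therefore finite score; by the invariant $R_i$ contains a run ending in the same state as $\rho$ (a final state), and since every run in $R_i$ starts in the initial state this is an accepting run in $\delta_1\cdots\delta_i$. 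The step I expect to be the main obstacle is setting the invariant up so that it is simultaneously realisable in the shape ``$R_i$ = runs in $\delta_1\cdots\delta_i$'', preserved by a single round using only monotonicity (which forces the ``minimal finite score per state'' bookkeeping), and strong enough that limitedness pushes an accepting run back into $R_i$; once this is in place, the verification of the clauses is immediate from the definitions.
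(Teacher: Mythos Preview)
Your proof is correct and follows essentially the same approach as the paper: both define player B's strategy by keeping, after each round, only score-minimal runs to each state, and both use monotonicity in exactly the same place to show that this pruning never discards a run needed later. The only cosmetic difference is that the paper keeps \emph{all} optimal runs (so its $\delta_i$ consists of every last transition of an optimal run, and the key claim is that $\delta_1\cdots\delta_i$ equals the set of optimal runs), whereas you break ties and keep exactly one representative per state; this does not affect the argument.
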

\begin{proof} (This proof corresponds to Lemma 9 in~\cite{ColcombetLoding08}.)		
Given a scoring function, define a finite run  to be \emph{optimal} if its score  is finite  and  minimal among the scores of  runs that read the same word and end in the same state.
	 Consider the following strategy for player B: if player A has played a  word $w$, then player B responds with the set of transitions $t$ such that some  optimal run over $w$ uses $t$ as the  last transition. (This strategy depends on the choice of scoring function.)  We  show  that this strategy is winning in the limitedness game with bound $m'$.
	 Suppose that  in a finite prefix of a play where player B uses this strategy,  player A has chosen letters $a_1,\ldots,a_i$ and player B has replied with sets of transitions $\delta_1,\ldots,\delta_i$. 
	 
	 \begin{claim}
	 	The set $\delta_1 \cdots \delta_i$  is equal to the set optimal runs over $a_1 \cdots a_i$.
	 \end{claim}
\begin{proof}
	 If a run $\rho$ is optimal, then by monotonicity every prefix of $\rho$ is optimal, and therefore every  transition used by $\rho$ is the last transition of some optimal run,  thus proving membership $\rho \in \delta_1 \cdots \delta_i$. The converse inclusion is proved by induction on $i$. Suppose that  $\delta_1 \cdots \delta_i$ contains a run of the form $\rho t$, where $\rho$ has length $i-1$ and  $t$ is the  last transition used by the run. By induction assumption, $\rho$ is optimal. By definition of $\delta_i$ there is some optimal run of the form $\rho't$. By optimality, $\rho$ has smaller or equal score than  $\rho'$. By monotonicity  $\rho t$ has smaller or equal score than the optimal run $\rho'  t$, and therefore is itself optimal.
\end{proof}

Using the claim, we verify that  the above described strategy of player B is winning in the limitedness game with bound $m'$. Item 1) in the winning condition is clearly satisfied. Since optimal runs have finite score, and runs with finite score have value at most $m'$, item 2) is satisfied for bound $m'$. Since every word $a_1 \cdots a_i \in L$ admits an accepting run with value at most $m$, and runs of value $m$ have finite score, then $a_1 \cdots a_i$ admits an accepting optimal run, and therefore item  3) is satisfied.
\end{proof}

To complete the proof of Proposition~\ref{prop:run-system-characterisation}, we  show that there always exists a monotone scoring function. 

\begin{lem}\label{lem:what-loding-has}
	There exists a monotone scoring function with some finite bound.
\end{lem}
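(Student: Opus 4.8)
The plan is to build the scoring function explicitly from the structure of the cost automaton, rather than abstractly. The key idea (following Colcombet–Löding) is that a run's "score" should remember, for each counter, not just its current value but a bounded amount of history that lets us compare two runs ending in the same state. The natural candidate for a score is the tuple of counter values at the end of the run, ordered hierarchically (most significant counter first), together with whatever finite control information is needed — but raw counter values are unbounded, so we must truncate them at the threshold $m'$. Concretely, I would set $m' = m$ (or a slightly larger constant if the truncation argument needs room), and define $\score(\rho)$ to be $\infty$ if the value of $\rho$ exceeds $m'$, and otherwise the vector $(c_n,\ldots,c_0)$ of counter values reached at the end of $\rho$, read as an integer in a suitable base so that the hierarchical order on counters becomes the usual order on $\Nat$. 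Runs of value $\le m$ automatically get finite score, and runs of value $>m'$ get $\infty$, so the bound conditions hold by construction.

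The main work is verifying monotonicity: if $\score(\rho)\le\score(\rho')$ and $\rho,\rho'$ end in the same state $q$, then for any run $\sigma$ starting at $q$ we need $\score(\rho\sigma)\le\score(\rho'\sigma)$. Here I would exploit the hierarchical discipline of the automaton — every transition that touches counter $c$ resets all counters below $c$. The effect of $\sigma$ on the counter vector is therefore almost a function of $\sigma$ alone: running $\sigma$ from counter state $\vec v$ versus $\vec v'$, the counters get reset in a top-down cascade, and once a counter is reset by $\sigma$ its final value no longer depends on its starting value. The counters above the highest one touched by $\sigma$ simply get a fixed increment added. So the transformation $\vec v \mapsto (\text{value after }\sigma)$ is monotone with respect to the hierarchical order, which is exactly what we need; one then checks that the $\infty$ cases are handled correctly (if $\rho\sigma$ overflows then so does $\rho'\sigma$, since $\rho'$ started at least as high). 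The hierarchical reset condition is what makes this clean — without it, one counter's value could interfere with another's and monotonicity would fail.

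The step I expect to be the main obstacle is getting the truncation at $m'$ to interact correctly with monotonicity: when $\score(\rho)$ is finite but close to the threshold and $\sigma$ pushes $\rho\sigma$ over $m'$, we must be sure that $\rho'\sigma$ also goes over $m'$, i.e.\ that the overflow is "monotone" too. This forces a careful choice of how the score is defined near the boundary — it is not enough to clamp values naively; one wants the score to record, for the highest relevant counter, enough information (e.g.\ "this counter has been incremented $k$ times since its last reset" for $k$ up to $m'$, and "$\ge m'$" as a single top value) that the cascade argument still goes through. I would therefore define the per-counter score entry as $\min(k, m')$ where $k$ is the number of increments since the last reset of that counter, assemble these hierarchically, and check that the cascade induced by $\sigma$ is monotone on these clamped entries as well. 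Once monotonicity is established, the bound conditions are immediate, and the lemma follows, completing the proof of Proposition~\ref{prop:run-system-characterisation} and hence of Theorem~\ref{thm:main}.
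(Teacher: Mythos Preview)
Your proposal contains a genuine gap at precisely the point you flag as the main obstacle. The scoring function you propose --- the vector of end-of-run counter values read as a number in some base, set to $\infty$ once the value exceeds $m'$ --- is exactly the ``natural idea'' that the paper writes down first and then explicitly refutes. Your claim ``if $\rho\sigma$ overflows then so does $\rho'\sigma$, since $\rho'$ started at least as high'' is false. With two counters $0,1$, let $\rho$ do $m$ increments on counter $0$ (score $m$) and let $\rho'$ do one increment on counter $1$ (score $m{+}1$ in base $m{+}1$), so $\score(\rho)\le\score(\rho')$. Taking $\sigma$ to be a single increment of counter $0$, the run $\rho\sigma$ has value $m{+}1>m'$ and hence score $\infty$, while $\rho'\sigma$ has counter vector $(1,1)$ and finite score $m{+}2$; monotonicity fails. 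No larger constant $m'$ repairs this, since the same example scales. Clamping each entry at $m'$ does not help either: if you keep the clause ``score $=\infty$ when value exceeds $m'$\,'' the clamping is never reached before $\infty$ is assigned, and if you drop that clause then runs of arbitrarily large value receive finite score, violating the bound condition of a scoring function.

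The missing idea is \emph{carry}. The paper's score is not the tuple of actual counter values; it is an integer in base $m{+}1$ updated with ordinary arithmetic carry: an increment of counter $k$ zeroes digits $0,\ldots,k{-}1$ and \emph{adds} $(m{+}1)^k$ to the score, so that when digit $k$ would reach $m{+}1$ it overflows into digit $k{+}1$ instead of jumping to $\infty$. The score becomes $\infty$ only when the top digit $n$ overflows, which forces $m'$ to be roughly $(m{+}1)^{n+1}$ rather than $m$. In the example above, $\rho\sigma$ now gets score $m{+}1$ (the overflow from digit $0$ carries into digit $1$), which is indeed $\le m{+}2=\score(\rho'\sigma)$. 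Monotonicity is then immediate, because each transition acts on the score by ``zero out some low digits'' followed by ``add a fixed constant'', both of which are monotone on $\Nat\cup\{\infty\}$. Your intuition that the hierarchical reset discipline is what makes the argument work is correct, but it manifests through the carry mechanism, not through clamping.
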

\begin{proof}(This proof corresponds to Lemma 8 in~\cite{ColcombetLoding08}.)
	When   there is a single counter, a good scoring function is one which maps runs of value exceeding $m$ to $\infty$, and maps other runs to the number of increments after the last reset; in this special case $m=m'$. 
	In particular, the material given so far is sufficient to decide limitedness of distance automata, which are the special case of cost automata with one counter and no resets.
	
A more fancy  scoring function is needed when there are two or more  counters.  Suppose that the counters are numbered  $0,\ldots,n$. A scoring function needs to return one number (possibly $\infty$), which somehow summarises all counter values at the same time. One natural idea would be to define the score of a run to be $\infty$ if its value exceeds $m$, and otherwise to be 
	\begin{align*}
		\sum_{i=0}^n a_i \cdot (m+1)^i
	\end{align*}
	assuming that $a_i$ is the number of increments on  counter $i$ after its last reset. In other words, a counter valuation (a tuple in $\set{0,\ldots,m}^n$) would be interpreted as a number in base $m+1$. This scoring  function is  unfortunately
 not monotone.  Indeed, suppose that there are two counters, and consider two runs:
	\begin{itemize}
		\item  $\rho$ does one increment on counter $1$ (score $m+1$);
				\item  $\rho'$ does  $m$ increments on counter $0$ (score $m$).		
	\end{itemize}
	Consider a transition $t$ that increments counter $0$.
	Appending $t$ to these runs reverses the order on scores, because $\rho t$ has score $m+2$, while $\rho' t$ has score $\infty$.
	
Monotonicity is recovered using  the following modification. When an increment to counter $i$  causes its value to reach $m+1$, then instead of setting the whole score to $\infty$, the value of counter $i$ is carried over to counter $i+1$, which may trigger further carries. The score becomes $\infty$ only when the most important counter $n$  reaches $m+1$. 
	
	Formally, the score of a run $\rho$ is defined as follows by induction on length. The score of an empty run is $0$. Suppose that  the score of $\rho$ is already defined and $t$ is a single transition. If the score of $\rho$ is $\infty$, then also the score of $\rho t$ is $\infty$. Otherwise, let
	\begin{align*}
		\sum_{i=0}^n a_i \cdot (m+1)^i
	\end{align*}
 be the  base $m+1$ representation of the score of $\rho$.
If the transition $t$ does not do any counter operations, then the score of $\rho t$ is the same as the score of $\rho$. If the transition $t$ resets some counter  $k \in \set{0,\ldots,n}$, then the score of $\rho t$ is obtained by resetting digits on positions $0,\ldots,k$, i.e.~the score of $\rho t$ is
	\begin{align*}
		\sum_{i=k+1}^n a_i \cdot (m+1)^i
	\end{align*}
    Finally, if  the transition $t$ increments some counter $k \in \set{0,\ldots,n}$, then the score of $\rho t$ is obtained by resetting digits on positions $0,\ldots,k-1$ and incrementing the digit on position $k$, i.e.~the score of $\rho t$ is 
	\begin{align*}
		(a_{k}+1) \cdot (m+1)^{k} + 		\sum_{i=k+1}^n a_i \cdot (m+1)^i
	\end{align*}
	If, as a result of the increment, the score reaches the maximal number, call it $m'$, which can be stored in base $m+1$ representation using digits $a_0,\ldots,a_n$, then  the score of $\rho t$ is defined to be $\infty$.  We show below  that the function thus defined is  a monotone scoring function with the bound being $m'$ defined above. 
	\begin{itemize}
		\item Monotonicity.  We need to show 
		\begin{align*}
				\score(\rho) \le \score(\rho') \qquad \Rightarrow \qquad \score(\rho t) \le \score(\rho't ).
		\end{align*}
		The most interesting case is when the transition $t$ increments some counter $k$.  In this case, the score is obtained by first  resetting digits $0,\ldots,k-1$, which is monotone, and then adding $(m+1)^k$, which is also monotone.
		\item Runs with value at most $m$ have finite score. By induction on the length of the run, one shows that if a run has value at most  $m$, then for every $i \in \set{0,\ldots,n}$, the $i$-th digit in the base $m+1$ representation of its score is equal to the number of increments on counter $i$ after its last reset.
		\item Runs with finite score have value at most $m'$.  Toward a contradiction, suppose that a run with finite score has value strictly bigger than $m'$. This means that for some counter $i$, the run does more than $m'$ increments without any reset of counter $i$.  During these increments,  the score will carry over so that the value of counter $n$ exceeds $m$, and therefore the score will become infinite.
	\end{itemize}
\end{proof}

This completes the proof of equivalence of (a) and (b) in Proposition~\ref{prop:run-system-characterisation}, and therefore also the proof of Theorem~\ref{thm:reduction}.

%
%
%
%
%
%
%
%

\newcommand{\noopsort}[1]{}

\end{document}